\documentclass[letterpaper]{article} 
\usepackage[preprint]{aaai2027}  
\usepackage[hyphens]{url}  
\usepackage{graphicx} 
\usepackage{natbib}  
\usepackage{caption} 
\usepackage{booktabs}
\usepackage{array}
\usepackage{amsmath,amssymb,amsthm}
\usepackage{tikz}
\usetikzlibrary{fit,positioning}
\newtheorem{proposition}{Proposition}

\title{The Working Set of a Coding Agent:\\Coherence Debt in Repository-Scale Tasks}

\author{
    Bardia Mohammadi\textsuperscript{\rm 1},
    Lars Klein\textsuperscript{\rm 2},
    Aman Chadha\textsuperscript{\rm 3},
    Akhil Arora\textsuperscript{\rm 4},
    Laurent Bindschaedler\textsuperscript{\rm 1}
}
\affiliations{
    \textsuperscript{\rm 1}Max Planck Institute for Software Systems
    \textsuperscript{\rm 2}EPFL
    \textsuperscript{\rm 3}Apple\thanks{Work done outside of role at Apple.}
    \textsuperscript{\rm 4}Aarhus University\\
    \{bmohammadi, bindsch\}@mpi-sws.org
}

\begin{document}
\maketitle

\begin{abstract}
Repository-scale coding requires an agent to keep tests, imports, configuration, and migration rules consistent within a bounded context window. We model this as reconstructing a \emph{coupled-fact graph}: at each edit, a required fact comes from recent context or parametric memory, and the facts covered by neither form \emph{coherence debt}. We supply and withhold each channel and inject faults across seven models and five harnesses. As expected, no model completes a task on an unseen API with both channels empty, and putting the facts in the prompt restores success. When a rename defeats what models memorized about a real library, all seven fail in the same place, passing and missing the same tests. Availability decides the outcome and distance does not: withholding a fact costs exactly the work it supports, and a supplied fact works as well far from the edit as next to it. Harnesses pay unequal prices for it: configurations that all pass every test differ more than tenfold in tokens consumed because they rebuild the same content at different rates, and spending more recovers nothing when facts are withheld. A missing fact produces wrong work rather than absent work: an agent asked to act acts, fabricating the file or guessing the value, so instruments built on reads look for a hole already filled. How often it says it is blocked instead is a property of the model, from every trial to none. Availability does not settle every edit: where standard and code disagree, agents follow the standard even when it prescribes the worse code, so a stale convention file costs more than no file. Because parametric memory substitutes for reading, on SWE-bench, where models likely know the repositories, reads no longer predict success. Harnesses should keep the facts an edit depends on available when the agent writes, and check that availability against what the agent produces rather than what it reads.
\end{abstract}

\section{Introduction}
\label{sec:intro}

Coding agents increasingly attempt migrations, upgrades, and bug fixes that span a repository. Such work is difficult for a reason that is easy to miss when evaluation observes only the final patch: an edit in one file is correct only if it remains consistent with facts elsewhere. A validator must agree with its tests, a renamed symbol with every import, and a configuration change with the runtime contract. No one hands the agent this structure. The agent must reconstruct it through reads, edits, tests, and sometimes handoffs to other workers.

We study the availability of those coupled facts \emph{at the moment of an edit}. A tool-using agent has two ways to obtain a required fact. The fact may sit in recent context because the agent read it or the harness supplied it, or the model may already hold it in parametric memory. The first channel is bounded and evicts facts as new reads arrive, while the second is fixed for a model and unreliable on novel application programming interfaces (APIs). If neither channel covers a fact, the edit proceeds from an incomplete view. We call this shortfall \emph{coherence debt}.

This framing adapts the working-set idea of virtual memory~\citep{Denning1968WorkingSet} to an addressless setting. Repository facts behave unlike pages: a migration rule stated in prose and a stale implementation may encode different versions of one fact, yet no address or invalidation bit tells the agent which is current. More context therefore helps only when it contains the facts coupled to the present edit. Likewise, decomposition helps only when the partition leaves mutually consistent facts together.

Existing work establishes repository-level evaluation~\citep{Jimenez2024SWEBench}, builds tool-using loops~\citep{Yang2024SWEAgent,Wang2025OpenHands}, guides retrieval with repository structure~\citep{Ouyang2025RepoGraph}, and shows that successful agents gather context before editing~\citep{Mehtiyev2026BehavioralDrivers}. We ask a narrower causal question: \emph{which facts must be available when the agent writes, and can recent context and model prior substitute for one another?}

We manipulate both channels, then measure how they operate in trajectories. Four fictional API migrations, each with 12 mechanically checked requirements, run closed-book: the model receives the task description but no workspace and no tools, so both channels start empty. A matched front-loaded condition puts the exact rules and source files in the prompt, with the model and tool surface unchanged. A real Pydantic migration with 79 tests, and a twin with every API name renamed, hold the task fixed while making memorized knowledge useless. In tool-using runs we measure which of an edited file's dependencies the agent read shortly before the edit, and we build synthetic tasks whose required facts we know because we author them. The Experimental Design section gives the full setup.

We draw five claims from this evidence. First, context substitutes for missing knowledge: across 154 closed-book trials on the four fictional migrations no model completes one, and putting the same facts in the prompt lifts 299 of 300 matched trials to at least 9 of the 12 requirements. Second, models that memorized the same library fail in the same place: on the renamed migration, 66 of 70 trials across seven models end at the same score, each passing the identical 24 of the 79 tests. Third, availability decides the outcome and distance does not: withholding required facts costs exactly the work they support and no more, so damage falls linearly across 30 trials, while a supplied fact is used as reliably at the far end of a $128{,}000$-character context as beside the edit. Fourth, small exploratory runs indicate that these effects act through the coupling of the task: irrelevant content hurts only when it sits inside the files the agent must read, and splitting the work across subagents hurts a tightly coupled task while leaving independent fixes intact. Fifth, harnesses pay unequal prices for the same result: across configurations that all pass every test, the tokens consumed over a run differ by $12.8\times$ while the amount held in front of the model at any moment differs by only $1.8\times$, and the expensive configurations recover nothing extra when facts are withheld. Throughout, a missing fact produces wrong work instead of absent work: the agent fabricates the file or guesses the value, and how often it reports being blocked instead ranges from every trial to none across the models we test. Two scope conditions matter just as much. Availability does not decide the edit on its own: when a written standard and working code state the same fact differently, agents follow the standard in every one of 39 trials across two harnesses, even where that means writing the worse code. And because parametric memory substitutes for reading, on SWE-bench, where the models likely already know the repositories and we observe the task structure only coarsely, what an agent reads no longer predicts whether it succeeds.

\paragraph{Contributions.}
We (1) show that withholding facts costs exactly the work they support, and derive from a two-channel, edit-time account of repository coherence why read-derived instruments cannot see that shortfall, namely that they must overstate missing facts by exactly the parametric coverage, and we confirm it by enumeration; (2) isolate channel substitution with controlled closed-book, rename, and front-load interventions; (3) establish that coverage acts through presence, by fault injection on tasks whose coupled facts we construct; (4) bound the account from within our own workloads, showing coverage does not determine the edit when covered facts conflict, since agents follow a written standard over contradicting code even where it is the worse guide; (5) report that the framework does not transfer to real repositories rather than treat a within-workload diagnostic as universal; and (6) show that read-derived instruments cannot measure the shortfall they were built for, since agents compensate for a missing fact by acting, and give the consequences for harness design, including that a stale convention file costs more than no file at all. We provide the workload rules and trial ledger (Appendix~\ref{app:materials}), notation (Appendix~\ref{app:notation}), instrumentation (Appendix~\ref{app:measurement}), reproduction details (Appendix~\ref{app:reproducibility}), and the ethics and data statement (Appendix~\ref{app:ethics}) in the Appendix.

\section{Coherence as Edit-Time Coverage}
\label{sec:model}

\paragraph{Coupled-fact graph.}
A task $T$ induces a graph $G_T=(V_T,E_T)$. The nodes are atomic facts relevant to the patch: symbols, tests, configuration values, imports, migration rules, and invariants. An edge means that relying on or changing one endpoint requires the other to remain consistent. Let $C_T\subseteq V_T$ be a minimal set that must be jointly correct for the task oracle to pass, and let $C_T^{(i)}$ be the subset required by edit $e_i$. $C_T$ is relative to the oracle, the fact schema, and the implementation path, so alternative correct patches can induce different minimal fact sets. Tests and prose rules couple files that share no static dependency, so $G_T$ extends well beyond the import graph.

At edit time $t_i$, $R_{t_i}$ denotes facts resident in the effective context and $K_M$ facts available from model $M$'s parametric memory. We define per-edit coherence debt as
\begin{equation}
D(e_i)=\left|C_T^{(i)}\setminus\left(R_{t_i}\cup K_M\right)\right|.
\label{eq:event-debt}
\end{equation}
The union carries the key prediction: a fact may arrive through either channel, so success should depend on coverage and not on which channel supplied it. Uncovered coupled facts create a specific, measurable risk. There are other causes of failure, and coverage alone does not guarantee a correct edit.

\begin{figure}[t]
\centering
\begin{tikzpicture}[font=\scriptsize,>=stealth,
  fact/.style={circle,draw=black!70,minimum size=5.5pt,inner sep=0pt},
  req/.style={fact,fill=black!60},
  box/.style={draw,rounded corners=2pt,inner sep=3pt,align=center}]
\node[req] (f1) at (0.2,1.15) {};
\node[req] (f2) at (0.9,1.42) {};
\node[req] (f3) at (0.65,0.72) {};
\node[fact] (f4) at (1.45,1.05) {};
\node[fact] (f5) at (1.30,0.35) {};
\draw[black!50] (f1)--(f2) (f2)--(f3) (f1)--(f3) (f3)--(f5) (f2)--(f4);
\node[draw,rounded corners=3pt,fit=(f1)(f2)(f3)(f4)(f5),inner sep=4pt] (G) {};
\node[anchor=south] at (G.north) {$G_T$};
\node[box] (edit) at (3.35,0.9) {edit $e_i$};
\draw[->] (G.east)--node[above] {$C_T^{(i)}$}(edit.west);
\node[box] (R) at (5.45,1.45) {recent context\\$R_{t_i}$};
\node[box] (K) at (5.45,0.35) {model prior\\$K_M$};
\draw[->] (R.west)--(edit.north east);
\draw[->] (K.west)--(edit.south east);
\node at (3.35,-0.05) {$D(e_i)=|C_T^{(i)}\setminus(R_{t_i}\cup K_M)|$};
\draw[->,dashed] (edit.south)--(3.35,0.18);
\end{tikzpicture}
\caption{Each edit requires a slice of the task's coupled-fact graph. Recent context and model prior are substitutable channels, and their uncovered remainder is coherence debt.}
\label{fig:model}
\end{figure}

\paragraph{Trajectory and thrashing.}
Because $R_t$ is bounded, required facts enter and leave the effective working set. We call a trajectory \emph{thrashing} when uncovered edit-time debt survives repeated read--edit--test cycles because no new evidence retires it. A long successful trajectory may accumulate more raw debt than a short failure, so the relevant diagnostics are debt per edit and its mean over the final quarter of a trajectory, rather than unnormalized cumulative debt.

\paragraph{Observable proxy.}
Neither $G_T$ nor internal context retention is directly observable in historical harness logs. For an edit of file $f_i$, we build a static import graph and compute
\[
\rho_w(f_i,t_i)=
\frac{|N(f_i)\cap\mathrm{Read}(t_i-w,t_i)|}{|N(f_i)|},
\]
where $N(f_i)$ contains one-hop import neighbors and $\mathrm{Read}(t_i-w,t_i)$ contains files read during the preceding $w$ tool events. Reads after the edit never count, and we set $\rho_w=1$ when $|N(f_i)|=0$. This convention treats import-isolated files as covered by construction and cannot detect their non-import dependencies. We call $1-\rho_w$ the \emph{residency score} and reserve \emph{coherence debt} for the latent fact-level quantity of Equation~\ref{eq:event-debt}: the score counts unread neighbor files, not missing facts, and misses dynamic, prose, prompt-supplied, and parametric ones. We sweep $w\in\{4,8,16,32,64,128\}$ rather than reporting a single favorable window.

\paragraph{Predictions.}
We pre-specify four tests. (P1) If $R_t$ and $K_M$ are empty, novel migrations should reach a floor, and supplying the missing facts through either channel should lift it. (P2) Models sharing a partial prior should fail on the same task region, and not merely at the same rate. (P3) Coverage should act through presence rather than proximity: withholding required facts should cost only the work those facts support, while a fact's distance from the edit should not matter, and total read volume should not substitute for either. (P4) Placement and decomposition should matter through the coupled-fact graph: unavoidable in-file bloat should displace required facts, and partitioning across a coupled cut should add incoherence.

\subsection{A versioned event semantics}
\label{sec:events}

Historical tool logs expose actions and observations but not the facts that justified a write. To make the framework prospectively measurable, we define a run as a totally ordered stream of five event types. Each event carries a run identifier, monotone event identifier, timestamp, actor, and type-specific payload, and file versions advance on every edit, including a restoring edit. Table~\ref{tab:events} lists the payload each type must record. The actor emits an edit intent before the write, naming the required facts and versions it relies on. This records the actor's claimed support set, which exposes missing and stale support before the outcome but requires external validation rather than establishing ground truth.

\begin{table*}[t]
\centering
\scriptsize
\begin{tabular}{@{}p{0.15\linewidth}p{0.35\linewidth}p{0.42\linewidth}@{}}
\toprule
Event & Required payload & Role in coverage analysis \\
\midrule
\texttt{file\_read} & path, observed repository/file version, byte range & What source reached an actor; asserts no semantic extraction by itself. \\
\texttt{fact\_extracted} & fact identifier/value, source span and version, acquisition mode & Adds a versioned fact to the actor ledger by reading, prompt supply, or handoff. \\
\texttt{edit\_intent} & target/base version, required facts, relied-on fact versions, proposed transformation & Records the actor's claimed pre-write support set, which exposes missing and stale dependencies before the outcome but requires external validation. \\
\texttt{test\_feedback} & command, oracle result, diagnostics, contradicted intents & Ties environmental evidence to the edits it validates or invalidates. \\
\texttt{revert} & restored content/version, reverted intent, triggering feedback & An explicit compensating write, not a reversal inferred from a later diff. \\
\bottomrule
\end{tabular}
\caption{Prospective five-event representation. Historical experiments use coarser proxies. The schema specifies what a coherence-aware harness should retain.}
\label{tab:events}
\end{table*}

The stream lets us separate four causes that otherwise end in the same failed test: (i) a \emph{working-set miss}, where a required fact has no current extraction for the editing actor; (ii) a \emph{stale read}, which relies on a fact from an obsolete source version; (iii) a \emph{handoff gap}, where a current fact held by a parent never reaches the assigned worker; and (iv) a \emph{speculative write}, an uncovered intent later contradicted and undone. We measure parametric shortfall separately through closed-book behavior, since no harness logs internal $K_M$.

These distinctions matter: a reread repairs a missing or stale fact but not a handoff policy, serialization removes cross-worker gaps at the cost of time, and a stronger prior covers a fact with no read event. We formalize that last asymmetry as a measurement proposition, stated and proved in the Appendix: an event-only estimator observes read-covered facts but not parametric coverage, so it reports the union of the true uncovered set and the parametrically covered set, and its overstatement equals exactly the parametric coverage. We verify the implementation against that prediction by enumerating every read/parametric/uncovered assignment of required facts across 1,089 simulator runs (Appendices~\ref{app:estimator} and~\ref{app:event-simulator}). Of 124 fully covered runs, 119 pass the oracle while the estimator still reports working-set misses. A union-aware estimator is exact. So $\rho_w$ is a lower bound on observable read-derived coverage rather than an estimate of total coverage, and we keep the empirical score deliberately narrow: historical logs recover reads, edits, tests, and worker identity, but never structured fact extraction, so every headline trajectory result uses the residency score. We exclude a tempting ``retirement overrun'' term because its reliable estimator depends on the terminal outcome and would leak the label.

\section{Experimental Design}
\label{sec:design}

\paragraph{Channel-control workloads.}
Four hand-authored migrations, Sprocket (Rust), Grimwire (Go), Kestrix (Python), and Zynet (JavaScript), define fictional v1$\to$v2 APIs. Each contains three editable files, a change log, and 12 mechanically checked migration requirements. We invented their library names and rules, so prior exposure is highly unlikely, though generic idioms remain available. A standard closed-book prompt describes the requested change but withholds the workspace and tools, so $R_t\cap C_T\approx\emptyset$: the task description remains while its coupled facts are absent. In the matched front-loaded condition, the prompt also carries the verbatim change log and editable v1 sources. We leave the model and sandbox unchanged.

To that we add a 79-test Pydantic v1$\to$v2 migration, which $K_M$ plausibly covers, and an adversarial rename that preserves behavior while replacing Pydantic surface forms with a lexical shim. Seven model families participate in closed-book trials (Sonnet, Haiku, Codex, Z.ai, DeepSeek, Qwen, Gemini), and five participate in the fully sandboxed front-load matrix. Codex is absent from that matrix because of a startup-time sandbox interaction that precedes any model call, and Gemini because its quota expired before the matrix ran. Under a relaxed-sandbox runner Codex reaches $12/12$ on each of the four workloads, in 12 of 24 trials. We report that number in the Appendix as an instrumentation diagnostic, not as headline evidence.

\paragraph{Tool-using workloads.}
The primary tool-using corpus contains 122 matched trials across Claude Code, Codex CLI, Aider~\citep{Gauthier2023Aider}, and OpenHands. Its central task is a two-app Pydantic migration with 79 tests. We vary irrelevant content while holding target behavior fixed: a lean repository, standalone documentation the agent can skip, code bloat embedded in files the agent must edit, and a renamed code-bloat twin that weakens the parametric shortcut. A matched independent-fixes task has 12 modules with one test each and no cross-module imports or shared names, which gives us a zero-coupling decomposition control.

For ground truth, we generate synthetic-coherence tasks that couple three files through a random literal stored in a per-motif secret file. The correct edit is defined by that literal and its upstream dependency, so we know $C_T^{(i)}$ by construction. We withhold the secret for a controlled number of motifs, which fixes the size of the injected fault while leaving the rest of the task intact. The Appendix provides the trial ledger, parsers, and graph construction.

\paragraph{Outcomes and uncertainty.}
Closed-book outcomes are evaluator scores or passed-test identity, the set of tests a trial passes. Tool-using success requires the task oracle to pass. We measure separation by within-cell ROC AUC, with 95\% intervals from 1,000 within-cell bootstrap resamples. Thirty cells ran twice in separate batches, and outcomes disagree across batches on 4.8\% of matched trials, which bounds run-to-run noise under the contrasts below. Small intervention cells ($n=3$--$12$) are exploratory. For external validity, we analyze 100 SWE-bench Verified instances across eight repositories and four model/harness families (400 attempted, 397 scored), and only recoverable multi-edit transcripts enter trajectory statistics.

\paragraph{Isolation and validity controls.}
Closed-book inference is unusually vulnerable to accidental workspace leakage. Our final sandbox denies the complete project root, which holds answers, generators, prior transcripts, and sibling trials alike, and re-allows only the empty trial directory and the command-line configuration paths needed at startup. We exclude runs from four earlier policies that left answer keys, sibling workloads, cross-trial artifacts, or generator scripts readable, and we replace test-name excerpts with per-file counts because identifiers revealed migration rules. The evaluators for the novel workloads contain four markers per editable file, which makes 754 controlled trials practical but can reward imitation. We therefore require emitted files to parse or typecheck, compare front-loaded ceiling outputs with hand-authored references, and score the Pydantic family with behavioral tests. For the trajectory study, all windows are causal, outcome labels never enter the feature, and we reproduce the result from normalized event streams. The Appendix reports the trial ledger, extraction rules, and identifiability checks under varying irrelevant reads and fact-set sizes.

\paragraph{Evidence tiers.}
The conflicting-source contrast was not pre-specified, and we report it as a scope condition the four predictions did not anticipate. We treat five large contrasts as confirmatory, four pre-specified and the conflicting-source contrast post hoc: the novel closed-book floor, front-load recovery, the renamed test-identity failure point, the fault-injection sweep with its distance arms, and the conflicting-source contrast. The bloat, decomposition, model-capacity, and harness swaps are mechanism probes with small cells. We read them as directions only, and no collection of $n=3$ cells carries a main claim.

Supporting detail sits in the Appendix, grouped to follow this paper's order. Appendix~\ref{app:materials} lists the workloads and how trials are grouped, and Appendix~\ref{app:order} states the order-independence result and its limit. Appendix~\ref{app:distance} and Appendix~\ref{app:invariants} give the supplied-fact distance and invariant-retention sweeps behind the presence result. Appendix~\ref{app:semantic-graph} reports a richer neighborhood we piloted and did not adopt, and Appendix~\ref{app:stability} reports run-to-run stability.

\section{Results}
\label{sec:results}

\subsection{The channels substitute}
\label{sec:substitution}

Table~\ref{tab:channel-results} reports the controlled channel experiments. With no workspace and no task-specific prior, all 154 novel closed-book trials score $0/12$: no family guesses even one complete migration. The Wilson score-interval 95\% upper bound on a nonzero complete-solution probability is 2.4\%, so we report an exact floor and not a generic claim that novel APIs are difficult. Novelty alone does not force that floor: on a fifth workload, Flareforge, whose renamed JavaScript \texttt{Result} API is idiomatic in Rust and Elm, five families most often score $3/12$ by analogy while still missing the workload-specific rename. What produces the floor is missing prior coverage rather than unfamiliarity.

\begin{table*}[t]
\centering
\scriptsize
\begin{tabular}{@{}>{\raggedright\arraybackslash}p{0.16\linewidth}>{\raggedright\arraybackslash}p{0.20\linewidth}c>{\raggedright\arraybackslash}p{0.18\linewidth}>{\raggedright\arraybackslash}p{0.31\linewidth}@{}}
\toprule
Experiment & Channel state & $n$ & Outcome & Result \\
\midrule
Novel closed-book & $R_t=\emptyset$, task-specific $K_M\approx\emptyset$ & 154 & complete migration & $0/154$; every trial $0/12$ \\
Novel front-load & exact rules/source supplied in $R_0$ & 300 & at least $9/12$ requirements & $299/300$; mean cell scores $9.0$--$12.0$ \\
Pydantic closed-book & public API available mainly through $K_M$ & 32 of 36 & passed-test identity & same $53/79$ tests; pairwise Jaccard $1.000$ \\
Renamed closed-book & lexical access to prior defeated & 66 of 70 & passed-test identity & same $24/79$ tests across seven families; Jaccard $1.000$ \\
\bottomrule
\end{tabular}
\caption{Controlled evidence for two-channel coverage. The last two rows count trials at the dominant score out of all trials in that condition. Within each, every cross-model pair has an identical passed-test set. Jaccard is intersection over union.}
\label{tab:channel-results}
\label{tab:frontload}
\end{table*}

The front-load condition changes only fact availability, which tests P1. Beyond the 9-of-12 threshold the table reports, 213 of 300 trials satisfy every requirement, and the one run below threshold fails in a revealing way: it emits narrative where it should emit file blocks. Ceiling-hit rates vary by family and workload, so recovery is imperfect. The reversal from floor to ceiling is what substitution predicts, since context supplies facts the prior lacks.

We read this pair as calibration: compliance is expected once the rules are in the prompt, and the informative half is that the floor sits at exactly zero, which fixes the endpoints for the sweeps that follow. Mean line-level Jaccard against the hand-authored migration is 1.00 for Kestrix, 0.99 for Sprocket and Zynet, and 0.79 for Grimwire, where valid Go idioms differ in threading \texttt{context.Context}.

The Pydantic pair exposes partial $K_M$, which tests P2. Six of seven families converge on one score in ordinary closed-book trials, and renaming the API relocates that shared failure point without dispersing it: the trials that land there pass not merely an equal number of tests but the same ones, across all seven families. The surviving tests concern locally idiomatic application wiring, while the failures require following renamed validator, settings, and configuration rules. This is an observable boundary of the shared prior, not a claim about mechanism.

The prior does not give way all at once. We rename a controlled number of symbols and score each one by how the agent resolved it, which separates two failures a pass rate merges: the agent may take the supplied name, revert to the original, use both forms at once, or invent a third. Across 52 trials (Appendix~\ref{app:renamesweep}) the share of renamed symbols resolved correctly falls from 30\% at five renamed symbols to 18\% at ten and holds there at nineteen, while mixed use climbs from 8\% to 35\% and then 39\%, and invented names appear only past five renames. The prior first overrides the supplied name, then mixes both names inside one file, and the effect saturates once about half the surface is renamed.

\subsection{Availability decides, distance does not}
\label{sec:residency}

If coherence debt is a count of missing facts, then damage should add up rather than compound (Appendix~\ref{app:faultinjection}). The synthetic tasks test P3 directly, since we build the coupling ourselves: each motif ties three files through one secret value, so we withhold it for exactly $m$ of eight motifs and leave the rest untouched. Damage tracks the injection precisely. Withholding $m$ motifs costs the work of those $m$ and nothing more: across 30 trials, six per level, passed tests fall $32.0$, $24.0$, $16.7$, $8.0$, $0.0$ against a linear prediction of $32$, $24$, $16$, $8$, $0$, with a standard deviation at or below $1.1$. Absence adds over the coupled-fact graph rather than cascading through it, which is what the debt count assumes when it sums uncovered facts.

A present fact stays usable, and we could not make it lapse. We state sixteen invariants once, then issue up to ninety-six unrelated tasks one at a time, each revealed only after the last. Across seven trials the agent honors every invariant at every position, never rereads the statement, and accumulates roughly $140{,}000$ tokens. No working-set miss appears, which limits how much eviction can explain.

Distance does not decide the matter either. Supplying the fact and varying only its distance from the edit leaves success flat across three model families and two harnesses, out to $128{,}000$ characters on the tool-using harness and $200{,}000$ on the closed-book harness, while withholding it floors the same tasks. These arms sit at ceiling, so they bound large effects rather than excluding small ones. Together with the additivity above, they locate the mechanism in whether a fact is present at all rather than in how far away it sits. Presence and residency coincide here because nothing evicts: a fact we supply stays supplied. The working-set account predicts they separate only once a trajectory is long enough to lose one, and the capacity result says we never reached that point. We report presence, and treat residency as the mechanism these experiments bound rather than confirm.

\subsection{Withholding costs exactly the work it supports}
\label{sec:withholding}

The account's central claim is that an edit is correct only when the facts it
depends on are available as the agent writes. We test it by withholding those
facts directly. Each motif carries its required value in one file, and we remove
that value from $k$ of eight motifs while leaving everything else untouched, then
sweep $k$. Two removals are run separately: deleting the file, and replacing its
contents with a stub of identical length so a readable file remains. The tests
are withheld from the agent throughout, since the generated assertions state the
expected literals and would otherwise hand the agent the fact we removed.

Damage tracks the withheld facts with no slack at all. Each motif supports four
tests, and the median trial loses exactly four per withheld motif across both
removals and all levels:

\begin{center}
\begin{tabular}{lrrrrr}
\toprule
facts withheld & 0 & 2 & 4 & 6 & 8 \\
\midrule
tests passed, stub  & 32 & 24 & 16 & 8 & 0 \\
tests passed, deleted & --- & 24 & 16 & 8 & 0 \\
linear prediction & 32 & 24 & 16 & 8 & 0 \\
\bottomrule
\end{tabular}
\end{center}

Maximum deviation over the nine cells is zero tests ($n=6$ per cell). Appendix~\ref{app:withholding} gives the construction, the four controls that run before scoring, and the residency figures under the same manipulation. An
independent fault-injection sweep on the same workload family gives $32.0$, $24.0$,
$16.7$, $8.0$, and $0.0$ against the same prediction. Damage adds rather than compounds: a
missing fact costs the work it supports and nothing further.

\subsection{The working set is similar, the refetch rate is not}
\label{sec:harness}

If coherence turns on which facts are resident when the agent writes, then two
harnesses that reach the same coverage should be comparable in what they hold and
may differ in what they pay to hold it. We measure both over six configurations and three task sizes, with eight trials per cell: 144 trials, every one of which passes every test.

Peak per-turn context varies by $1.8\times$ across all eighteen cells: every
configuration puts about the same amount in front of the model at any one moment.
Cumulative input varies by $12.8\times$, from 293{,}882 tokens to 3{,}752{,}134.
The working set is therefore similar and the rate at which it is rebuilt is not.
The cheapest configuration completes in five tool calls and the most expensive in
seventy-nine, and an agentic loop re-sends its conversation each turn, so the
spread is a refetch rate rather than a capacity difference. Reasoning tokens are
at most 0.46\% of input and explain none of it.

A harness can therefore spend an order of magnitude more to assemble the same
working set. The obvious follow-up is
whether the expensive ones earn it when the facts are harder to reach, which the
first sweep cannot answer because coverage sits at its ceiling in every cell. We
therefore withheld $k$ of the eight required facts and reran five of the six
configurations at $k \in \{0,4,8\}$, with opencode excluded for the
instrumentation failure Appendix~\ref{app:harness} records:

\begin{center}
\begin{tabular}{lrrr}
\toprule
configuration & $k{=}0$ & $k{=}4$ & $k{=}8$ \\
\midrule
Opus     & 100\% & 50\% & 0\% \\
Fable    & 100\% & 50\% & 0\% \\
Sonnet   & 100\% & 50\% & 0\% \\
Haiku    & 100\% & 53\% & 0\% \\
Codex    & 100\% & 50\% & 0\% \\
\bottomrule
\end{tabular}
\end{center}

The outcome is set by availability and by nothing else. Every configuration loses
exactly the proportion withheld, and no two differ by more than three points at
any level. Their spend follows no such pattern: at $k{=}0$ it varies by $12.5\times$, and
Haiku consumes 5{,}730{,}807 cumulative input tokens to reach the 100\% that Opus
reaches on 459{,}122. Spending more does not recover more when the facts are
withheld, and it does not recover more when they are present. The extra
spending buys a different route to the same working set.

We report these as token counts rather than costs. Cache reads dominate every
total and are billed far below fresh input, vendors price caching differently,
and on uncached input the ordering does not survive: twenty-two uncached tokens
at the cheapest configuration against 163{,}236 at the most expensive. Appendix
\ref{app:harness} gives the per-vendor accounting, which is not comparable as
reported.

\subsection{Agents compensate rather than abstain}
\label{sec:compensation}

Withholding produces wrong work rather than absent work. An agent asked to do
something does something, so a missing fact yields a confident wrong edit. That
is the mechanism behind the linearity in Section~\ref{sec:withholding}, and we see it take several forms.

With the fact's file deleted, the agent writes its own and invents a value: in
one trial at $k=8$ it created all eight missing files and proceeded. Where a written standard contradicts working code, it substitutes the standard,
a case we return to below. It guesses, leaving placeholders or writing literals from nothing. It searches
elsewhere in the repository, which is the benign case. Or it stops and says it
cannot proceed.

The last of these is the interesting one, because it is the response that would
make coherence debt observable as the agent works rather than diagnosable only
after the failure. Which agents give it depends almost entirely on the model. We withheld every required fact and recorded what each configuration did, six or eight trials apiece:

\begin{center}
\begin{tabular}{lrrr}
\toprule
configuration & $n$ & blocked & share \\
\midrule
Opus, Claude Code    & 8 & 8 & 100\% \\
Fable, Claude Code   & 8 & 6 & 75\% \\
Sonnet, Claude Code  & 8 & 2 & 25\% \\
Haiku, Claude Code   & 8 & 1 & 12.5\% \\
GPT-5, Codex CLI     & 8 & 0 & 0\% \\
GLM-5.2, opencode    & 6 & 0 & 0\% \\
\midrule
pooled               & 46 & 17 & 37\% \\
\bottomrule
\end{tabular}
\end{center}

The range runs from never to always. Opus reports the missing file in every
trial. Codex CLI and opencode never do, and instead produce a confident wrong
migration each time. Haiku is the only configuration that fabricates the missing
file outright, in three of eight. Within one model the rate is stable: a separate
96-trial block on Haiku (Appendix~\ref{app:abstention}) gives 13.5\%, against 12.5\% here, and whether the absence announced itself as experimental made no difference we can detect in that block (Fisher exact $p=0.552$).

The capability is real and unevenly distributed. We currently reconstruct
coherence debt after a failure. An agent that says it lacks a required fact
converts it into a signal available as the edit happens, at no cost. Some
configurations already emit that signal and others never do, which makes it a
harness-selection question rather than a research one.

\subsection{Mechanism probes separate the channels}
\label{sec:mechanisms}

\paragraph{External supply versus self-reading.}
Our proxy watches the agent read, which makes it blind to facts the harness hands over. The two-channel account turns that blindness into a prediction: hand the agent the facts and it should succeed more while reading less, so the link between self-reading and success should weaken. We test this on the in-file-bloat migration with three prompts: a baseline with only the task description, an overlay that names ten target files and the required v1$\to$v2 transformations, and a troubleshoot variant adding edit-level pitfalls, with tools, tests, repositories, and budgets fixed.

Pass rate rises monotonically for both agents: Claude goes from 1/6 with the task description alone to 2/6 with the overlay and 2/3 once pitfalls are added, and Codex from 2/6 to 3/6 to 2/3, though the small cells do not separate the individual steps. The behavioral mechanism differs: for Claude the Spearman correlation between residency built from the agent's own reads and success flips sign once the overlay is present ($+0.866$ to $-0.828$), since the prompt already supplies the facts, while for Codex it weakens but stays positive ($+0.866$ to $+0.500$) because it keeps inspecting the named files. With one or two successes per cell these coefficients turn on single trials, so only direction is interpretable.

\paragraph{Parametric coverage changes failure shape.}
We hold the Pydantic code-bloat task and Claude Code harness fixed and replace Sonnet with Haiku. Sonnet succeeds in $1/6$ runs, and its five failures still pass 97--99\% of the test suite. Haiku succeeds in $0/3$, and two of its three failures pass 0\%: a missing late invariant and a failure to establish the migration are qualitatively different outcomes, though the third Haiku run reaches 97.5\% and shows the boundary is not clean. The framework accounts for this as a prior threshold. Above the threshold, $K_M$ covers most of $C_T$ and leaves a small workspace-dependent residual. Below it, the agent cannot form a viable initial patch. With only three Haiku runs this is a directional observation, not a model scaling law.

Together the probes bound the channel claim. $R_t$ covers facts from a prompt, an overlay, a read, or a handoff, so a read-derived proxy sees one acquisition mode only, while $K_M$ can shrink the residual task enough to turn catastrophic failures into near-misses. Measuring either alone can invert or erase an effect. Their union at the edit, as in Figure~\ref{fig:model}, is the correct object.

\subsection{Coupling predicts intervention effects}
\label{sec:interventions}

Table~\ref{tab:interventions} reports two matched interventions that test P4, and with it whether the framework predicts more than correlation. Placement matters more than volume. Standalone document bloat is easy to avoid and does not reduce pass rate (Claude $2/3$, Codex $3/3$). Embedding irrelevant code in required source files reduces success to $1/6$ and $2/6$, respectively. Agents actually read fewer bytes in some failing conditions, so raw context consumption cannot explain the ordering. The renamed twin further drops Claude to $0/3$, consistent with removing parametric coverage.

Decomposition depends on the cut. On the tightly coupled migration, Codex succeeds in $2/6$ decomposed runs but $3/3$ single-worker runs. On the size-matched independent-fixes control, both modes achieve $3/3$. Claude likewise preserves $3/3$ correctness on independent fixes and finishes 30\% faster with subagents (116.0 versus 167.5 seconds). These cells are small and exploratory, but their sign pattern is the one coupling predicts: parallelism is safe across independent components and risky when workers must maintain a shared invariant. Graph-partition accounts of multi-agent coding report the same pattern~\citep{Yang2026CoCoder,Pan2026TaskStructure}.

\begin{table}[t]
\centering
\footnotesize
\begin{tabular}{@{}l cc@{}}
\toprule
Condition & Claude & Codex \\
\midrule
Lean migration & 2/3 & 2/3 \\
Standalone document bloat & 2/3 & 3/3 \\
In-file code bloat & 1/6 & 2/6 \\
Renamed in-file bloat & 0/3 & 2/3 \\
\midrule
Coupled, decomposition on & 1/6 & 2/6 \\
Coupled, decomposition off & 1/3 & 3/3 \\
Independent, decomposition on & 3/3 & 3/3 \\
Independent, decomposition off & 3/3 & 3/3 \\
\bottomrule
\end{tabular}
\caption{Exploratory placement and decomposition interventions (successful trials). The in-file-bloat and coupled/decomposition-on rows report the same trials, re-cut by worker mode.}
\label{tab:interventions}
\label{tab:bloat}
\end{table}

\paragraph{Harness policy changes the visible symptom.}
Holding Sonnet and the code-bloat task fixed, Claude Code produces mostly all-or-nothing failures ($1/6$), Aider partial migrations ($0/6$, 39--70\% passing), and OpenHands a split between 60--67\% and 96--98\% ($0/6$). A shell-only Claude probe moves failed runs from 97--99\% passing to 0\%. These exploratory cells rank no harness. They show retry, edit, and retirement policies deciding how missing coverage surfaces: coverage determines whether the agent fails, and policy determines how.

\subsection{Coverage is not sufficient}
\label{sec:conflict}

Coherence debt counts a required fact as covered or not, which leaves no room for two covered facts that disagree. We build a workload in which they do. A written engineering standard states a rule and working code in the same repository demonstrates the opposite: integer cents against float division, \texttt{.get()} with a default against direct indexing, timezone-aware timestamps against a naive helper. Both sources sit in the window and neither is evicted. Tasks arrive one at a time, each gated on the previous handler existing, so the agent can neither read the queue ahead nor satisfy the set with one template.

The written standard wins completely (Appendix~\ref{app:conflict}). Across 39 trials the agent follows the document on every contested decision, at conflict counts of four and ten, on three model tiers of one harness and on a second harness with a different scaffold (Wilson 95\% interval $[0.91,1.00]$). Agreeing rules stay at ceiling, the agent never edits the contradiction away, and the source it follows does not drift across task positions. We then invert the workload, which rules out the obvious alternative that the model simply writes sensible code and the standard happens to agree. With the document demanding the worse practice and the code demonstrating the better one, the agent still follows the document: it writes camel-case handlers, indexes payloads directly, divides money into floats, calls the naive clock, and interpolates its log messages. No prior produces that combination by accident.

Coverage therefore remains necessary but is not sufficient. Withholding a fact still floors the task, yet when two covered facts conflict the outcome turns on which source carries authority, and $D(e_i)$ is blind to that distinction by construction. Nothing here is distant, stale, or evicted, so residency cannot account for it either. Whether authority, modality, or read order does the work is a question these cells do not separate, and we leave it open.

\subsection{A stale standard is worse than no standard}
\label{sec:stale}

Section~\ref{sec:conflict} shows that a written standard beats working code when
they disagree. That leaves a practical question it does not answer: what a
\emph{wrong} standard costs, given that project conventions are routinely
committed to repositories and routinely go out of date.

We compare three conditions on the same contested surfaces: a correct standard, no standard at all, and a standard demanding the worse form. We score the share of decisions written the better way, so all three conditions land on one scale:

\begin{center}
\begin{tabular}{lr}
\toprule
what the agent has & better form \\
\midrule
a standard agreeing with the code & 100\% \\
only code demonstrating the convention & 33\% \\
a standard demanding the worse form & 0\% \\
\bottomrule
\end{tabular}
\end{center}

Ten trials per condition give 3{,}385 scored decisions, and Appendix~\ref{app:stale} lists the contested surfaces and the scoring rule. The endpoints carry no variance at all: every one of the ten correct-standard trials writes every decision the better way, and every one of the ten stale-standard trials writes none. A stale standard is therefore worse than silence. It suppresses the inference
the agent would otherwise make from code that demonstrates the convention. Deleting an
out-of-date convention file beats leaving it in place.

\subsection{The score is workload-relative}
\label{sec:transfer}

A useful framework should state where it fails. A four-feature predictor (mean and final-quarter residency score, thrashing, and model capacity) reaches held-out AUC 0.71 under a random split but only 0.66 under leave-one-workload-out: the score's absolute level is workload-relative.

The sharper limit on the instrument comes from SWE-bench Verified~\citep{Jimenez2024SWEBench}. Resolved rates, meaning all instance tests pass, span 1.0--39.4\% across 397 scored trials over 100 instances from eight repositories: GPT-5 with Codex CLI 39.4\% (Wilson 95\% CI [30.3, 49.2]), Sonnet with Claude Code 31.3\% [23.0, 41.0], Haiku with Claude Code 22.2\% [15.2, 31.4], and Sonnet with Aider 1.0\% [0.2, 5.4], with three unparseable transcripts reducing the first three to $n=99$. A measurement limit compounds this. When an agent performs its edits through a script it writes, the per-file writes never appear as tool events, so any read-derived proxy, ours included, sees an emptier trajectory than the work required. If the score measures something general, it should separate success from failure here too. It does not: among 122 recoverable multi-edit trajectories the final-quarter residency score gives within-cell AUC $\approx0.49$, which is chance. An earlier score that included an outcome-gated ``retirement'' term produced AUC above 0.92, and once we remove that leakage the effect disappears. We therefore make no claim that the residency score, still less coherence debt itself, predicts SWE-bench resolution. Attrition does not explain the null. Reconstructing the cohorts, with \emph{executing} taken as a non-trivial wall time and a recorded evaluation and \emph{recoverable} as an event stream that survives and parses, 195 of 420 trials execute, and recoverable trials resolve at 49.4\% against 56.9\% for the rest (Fisher exact $p=0.310$). The rates differ from those we reported at submission, so these are reconstructions rather than the original cohorts. The inference is unchanged, and now survives a change of definition.

This null result is compatible with, but does not prove, the two-channel account. Popular repositories and APIs may be well represented in $K_M$, which makes the workspace-residency channel less decisive, and the import graph may also be a poor proxy for issue-specific coupling. Either explanation limits the present instrument: a deployable predictor needs task-specific fact extraction and calibration, not a universal threshold on import-neighbor reads.

Three repositories, \texttt{astropy}, \texttt{sphinx}, and \texttt{pylint}, resolve at 0\% across all four families, but we cannot attribute that regime to coupling from these data: historical streams lack intent--feedback--revert links, and the 100 Aider trials expose no recoverable multi-edit stream. We therefore leave reversion, bailout, and time-to-fix untested rather than manufacturing nulls.

\section{What the Event Stream Can and Cannot Measure}
\label{sec:measurement}

The residency score watches the agent read. Section~\ref{sec:compensation} shows
that an agent responds to a missing fact by acting, which means the shortfall
enters the tool stream as \emph{more} activity rather than less. A measure built on read events therefore looks for a hole the agent has already
filled.
We set this out because every trajectory result above rests on that measure. The
failure is behavioral rather than statistical, so averaging does not remove it.

\paragraph{Volume manipulations cannot separate structure from effort.}
Padding a repository with task-irrelevant text triples the material an agent must
work through, so it raises effort by construction. We previously reported that
residency separated a lean condition from a padded one while activity counts did
not, and we withdraw that reading. Rerunning the contrast at twenty trials per
condition puts every activity count at AUC 0.86--0.95 and the residency score at
0.55 with a 95\% interval of $[0.36,0.71]$, the reverse of the original result
and equally uninformative for the same reason. Repeating it on a fictional API,
where the model cannot work from recall, gives the same shape. The design cannot
support a claim in either direction, ours included. The case for residency as
the better feature rests on fault injection, which manipulates the facts
themselves.

\paragraph{Opening a file is not obtaining a fact, though we could not measure
the gap directly.}
Deleting a required file and replacing its contents with a stub of equal length
leave the agent in the same position: the fact cannot be obtained, and matched
trials fail identically. The two differ only in whether a file remains to open,
so a score that separated them would be tracking access rather than acquisition.
We tested this and found no effect we can defend. Withholding four of eight
facts, where every trial still yields a scorable edit, gives AUC 0.59 with a
95\% interval of $[0.36,0.81]$ over fourteen trials per arm. Withholding all
eight gives 1.00, but only five trials in one arm and six in the other survive to be scored there, and
they are selected by the very behavior under study, so we treat that figure as
an artifact of attrition rather than a result. The next paragraph, where the score credits files the agent wrote itself, carries this point, and the deletion-versus-stub contrast does not.

\paragraph{The score credits files the agent wrote itself.}
Because the agent supplies missing files, a naive reading of the stream counts
its own writes as coverage. Under deletion this returned 1.000, a perfect score,
in the condition where nothing was available. Disqualifying agent-authored paths
from the numerator corrects it and makes the same cell return 0.000. Any
event-derived coverage measure needs this exclusion, and to our knowledge it is
not standard.

\paragraph{Reads rise as availability falls.}
Reads of the withheld file rise from 5.5 to 10 to 13 as more facts are removed. The
score holds up or climbs precisely as the quantity it estimates collapses.

\paragraph{Import neighborhoods are a poor stand-in for the required facts.}
On workloads where we author the coupling, the required facts per edit are known,
so the fidelity of the proxy is measurable directly. Import edges recover 40\% of
the authored edges at 40\% precision. Making the graph directed removes every
spurious edge at no cost to recall, so the symmetric convention is pure loss.
Resolving the identifiers an edit uses reaches perfect recall at 4\% precision, which makes it a candidate generator rather than a measure. Appendix~\ref{app:edgefidelity} gives the per-generator figures. The workload is built
so that the required value must be written as a literal rather than imported,
which is deliberate: it is a real dependency with no syntactic trace, and it is
the case an import graph cannot see.

\paragraph{The measurement vanishes where the condition is most severe.}
The score exists only where the agent edits. Under full withholding the agent frequently
makes no edit, so roughly 60\% of trials contribute nothing and the remainder are
selected by the behavior under study. Only three of five neighborhood entries per
motif are withholdable at all, so the ceiling under total withholding is 0.333
rather than zero.

\paragraph{What to measure instead.}
On workloads whose coupling we author, the required facts are known and the
outcome is exact. That instrument produced the zero-deviation table in
Section~\ref{sec:withholding}, and it needs no graph, no window, no event stream. We keep the residency score for historical corpora, where nothing better
is recoverable, and we now state its limits rather than defending it.

\section{Implications and Limitations}
\label{sec:discussion}

\paragraph{Implications for harnesses.}
Larger windows, retrieval, memory files, and subagents are all means to an end. Our results suggest what that end should be: keep the facts coupled to the next edit both current and mutually consistent. A harness could log an edit intent, attach versioned supporting facts, invalidate them after writes, and require explicit transfer across workers, over a task graph that includes tests, instructions, and runtime invariants beyond static imports. Such a log also speaks to a failure mode we do not cover: agents that reach correct code and then discard it~\citep{Kim2026CoherenceCollapse}. If the agent produced the correct edit its facts were available, so the failure lies in retention rather than coverage, and we treat it as a candidate falsifier.

\paragraph{Latent state and proxy validity.}
$G_T$, $C_T^{(i)}$, and internal retention are latent. Import-neighbor residency is credible when static dependencies approximate task coupling and weak under dynamic dispatch, generated code, or prose rules. Tool logs show prompt-supplied facts and compaction only partially. The synthetic fault injection validates the residency proxy on one constructed family only. A semantic-graph pilot~\citep{TreeSitterContributors2026PyTreeSitter} improves packing but not file ranking, so we keep the import proxy throughout.

\paragraph{Causal and statistical limits.}
Fictional migrations remove direct API exposure but not generic priors. Marker evaluators constrain specified transformations rather than engineering quality, though reference Jaccard checks reduce this concern. Trajectory comparisons stay within harness and workload. Two limits deserve emphasis. The conflicting-source result is a perfect separation over 39 trials, so its interval remains wide at the lower bound ($[0.91,1.00]$) even though no trial dissents. In addition, these cells do not separate authority from modality or read order.

\paragraph{Scope.}
The framework targets work whose correctness depends on cross-file consistency and should explain less for single-file edits. Six of seven main workloads are migration-shaped, where the agent must discover a latent graph. Green-field development may build its graph while writing and behave differently. The residency score diagnoses structural fact availability, not planning, reasoning, test quality, or semantic search.

\paragraph{Falsifiable boundary.}
Three patterns would weaken the framework: reliable success when both channels lack required facts, equal predictive power for arbitrary and coupled-fact reads, or decomposition effects unrelated to the cut. The first is absent in our closed-book matrix, volume baselines and the fault-injection curve contradict the second, and the small intervention cells contradict the third without settling it. A decisive test should preregister a task-held-out graph and distance, fork the same pre-edit state, and randomize an identical required fact to absent, early, recent, and refreshed positions with equal-length irrelevant controls. It should then compare import, lexical, heterogeneous-graph, and dataflow retrieval on independently authored non-migration tasks before relating coverage to edit correctness.

\section{Related Work}
\label{sec:related}

\paragraph{Repository-scale coding agents.}
Benchmarks and scaffolds established the setting: SWE-bench scores issue resolution~\citep{Jimenez2024SWEBench}, agent loops expose read--edit--test trajectories~\citep{Yang2024SWEAgent,Wang2025OpenHands,Gauthier2023Aider,Yao2023ReAct,Shinn2023Reflexion}, and others exploit repository structure through change-impact planning~\citep{Bairi2024CodePlan}, code-graph navigation~\citep{Ouyang2025RepoGraph}, and efficient edit representation and resource accounting~\citep{Zhang2026SWEEdit,Fan2025SWEEffi}. We ask which coupled facts must be available when the agent writes, and whether the two channels substitute.

\paragraph{Measuring the context that reaches the patcher.}
ContextBench and CORE-Bench score retrieved context against gold annotations~\citep{Li2026ContextBench,Zhang2026COREBench}, SWE-Explore validates ranked context by restricted-context repair~\citep{Zhang2026SWEExplore}, compression asks how little suffices~\citep{Jia2026SWEzze}, and ContextCov enforces constraints declared in instruction files~\citep{Sharma2026ContextCov}. Trajectory studies diagnose failures post hoc~\citep{Bouzenia2025Trajectories,Xia2025Demystifying,Majgaonkar2025CodeAgentBehaviour,Sahoo2026AgentLens,Wang2026TrajAudit,Zhao2026FailureProcess}, and across $9{,}374$ trajectories successful agents gather context before editing~\citep{Mehtiyev2026BehavioralDrivers}, subsuming our residency-before-edit observation at far larger scale. Two studies share our vocabulary: Coherence Collapse finds agents discarding already-correct code~\citep{Kim2026CoherenceCollapse}, and Strained Coherence flags verbalized conflict in reasoning traces~\citep{Pandya2026StrainedCoherence}. Both classify completed trajectories. We manipulate fact availability before the edit, withholding and supplying each channel independently, and we find that a fact helps only while it is \emph{resident}.

\paragraph{Context as managed memory.}
Long-context studies report positional degradation and effective lengths below advertised windows~\citep{Liu2024LostMiddle,Hsieh2024RULER}. That work motivates harnesses that manage the window as a memory hierarchy~\citep{Packer2023MemGPT,Rafique2026ClawVM,Mason2026MissingMemory} and protocols that adapt cache-coherence and consistency notions~\citep{Goodman1983CacheMemory,Lamport1979Sequential} to multi-agent synchronization~\citep{Parakhin2026TokenCoherence,Yu2026MultiAgentMemory}. Others internalize facts into parameters instead~\citep{Yuan2026SEBench}. These build the substrate. Our formulation, which adapts the working set~\citep{Denning1968WorkingSet}, states what the substrate must preserve, and why repository facts are harder than pages: they carry no address. $C_T$ likewise descends from slicing and dependence graphs~\citep{Weiser1984ProgramSlicing,Ferrante1987PDG,Horwitz1990InterproceduralSlicing}, but its task-induced edges span prose rules, tests, and configuration invariants beyond static analysis.

\paragraph{Decomposition and context injection.}
Multi-agent work partitions by cohesion~\citep{Yang2026CoCoder} and bounds attainable success by a cut on the task constraint graph~\citep{Pan2026TaskStructure}, much as scalability laws bound speedup by coherence overhead~\citep{Gunther2002USL}. Context-file studies test repository instructions as an intervention~\citep{Gloaguen2026AgentsMd}. We claim neither as new. Our contribution is the shared explanation: both alter coverage of $C_T^{(i)}$ at edit time through $R_t$, while familiarity supplies the $K_M$ channel, so the coupling that limits partitioning also determines when front-loading helps.

\section{Conclusion}

Repository-scale coding depends on an edit-time working set of coupled facts. Success tracks whether those facts are present when the agent writes, from context or model prior, rather than how much context it consumes or how far back a supplied fact sits. Two limits bound that account: coverage does not settle an edit when two covered sources disagree, and the score that diagnoses failures within a workload does not transfer to real repositories.

\bibliography{coherence}

\clearpage
\appendix
\renewcommand{\thetable}{A\arabic{table}}
\setcounter{table}{0}

\section{Notation and Abbreviations}
\label{app:notation}

This document supplements the main paper and reuses its notation. A task $T$ induces a coupled-fact graph $G_T$. $C_T$ is the set of facts that must be jointly correct for the task oracle to pass, and $C_T^{(i)}$ the slice required by edit $e_i$. At edit time, $R_t$ denotes the facts resident in the effective context and $K_M$ those available from model $M$'s parametric memory. Coherence debt $D(e_i)=|C_T^{(i)}\setminus(R_{t_i}\cup K_M)|$ counts the facts covered by neither channel. The residency proxy $\rho_w(f_i,t_i)$ is the fraction of the one-hop import neighbors of file $f_i$ that the agent read during the preceding $w$ tool events.

We abbreviate area under the receiver operating characteristic curve as AUC, and report it throughout as a separation statistic in $[0,1]$, where $0.5$ is chance. We expand other abbreviations at first use.

\section{Workloads and Trial Organization}
\label{app:materials}

We connect the experimental design to the released artifact here. We summarize the closed-book migration rules and release names, then enumerate the matched-intervention blocks in the tool-using case study.

\paragraph{Closed-book migration rules.}
Table~\ref{tab:workload-rules} summarizes the rules that define success for each closed-book workload.

\begin{table*}[tbp]
\centering
\scriptsize
\renewcommand{\arraystretch}{1.08}
\begin{tabular}{@{}>{\raggedright\arraybackslash}p{0.16\linewidth}>{\raggedright\arraybackslash}p{0.78\linewidth}@{}}
\toprule
Workload & Migration-rule summary \\
\midrule
Sprocket (Rust) & Attribute \texttt{handler(async)} $\to$ \texttt{async\_handler}; builder \texttt{Route::builder()} $\to$ \texttt{Route::spec()}; add context type and marker. \\
Grimwire (Go) & \texttt{Open(dsn)} $\to$ \texttt{Connect(ctx, dsn)}; thread \texttt{context.Context} through create/read/update/delete (CRUD) methods; move \texttt{orm} to \texttt{v2/orm}; add marker. \\
Kestrix (Python) & \texttt{Schema} $\to$ \texttt{kestrix.Model}; \texttt{strict = True} $\to$ \texttt{mode = 'strict'}; factory call $\to$ \texttt{Foo.spawn()}; add marker. \\
Zynet (JavaScript) & \texttt{defineSchema} $\to$ \texttt{Schema.declare}; \texttt{required} $\to$ \texttt{req}; add \texttt{'sync'} to \texttt{validateOn}; add marker. \\
Flareforge (JavaScript) & Partial-idiom \texttt{Result<T,E>} migration: \texttt{throwOnError} $\to$ \texttt{resultMode}; add marker. \\
Pydantic (Python) & Real v1$\to$v2 migration on two FastAPI apps: BaseSettings, validators, root validators, JSON encoders, and ConfigDict; 79 tests. \\
Rename (Python) & The Pydantic migration behind a shim that renames \texttt{pydantic}, \texttt{BaseModel}, and related symbols to \texttt{valirex}, \texttt{RexBase}, and counterparts; 79 tests. \\
\bottomrule
\end{tabular}
\caption{Migration-rule summary for the closed-book workloads.}
\label{tab:workload-rules}
\end{table*}

\paragraph{Case-study trial ledger.}
We organize the main paper's tool-using corpus into eight matched-intervention blocks. Table~\ref{tab:ledger} lists each block's size, harness/model/workload coverage, and tested claim. Table~\ref{tab:sweeps} lists four parameter sweeps that sit outside that corpus and vary one quantity apiece. The later sweeps, namely direct withholding, supplied-fact distance, abstention, standard-against-code, and harness spend, are specified in their own appendix sections.

\begin{table*}[tbp]
\centering
\scriptsize
\begin{tabular}{@{}ccp{0.27\linewidth}p{0.52\linewidth}@{}}
\toprule
Block & $n$ & Design & Tested prediction \\
\midrule
A & 36 & Claude, Codex $\times$ 5 bloat conditions $\times$ Pydantic & Placement matters over volume. \\
B & 12 & Claude, Codex $\times$ 6 workloads   & Residency AUC across independent workloads. \\
C & 18 & Claude, Codex $\times$ \{lean, overlay, troubleshoot\} & Overlay inverts residency-success sign for Claude. \\
D & 3  & Claude-Haiku on Pydantic             & Capacity floor changes failure shape. \\
E & 3  & Claude shell-only on Pydantic        & Tool-surface swap changes edit interface. \\
F & 27 & 4 harnesses $\times$ Pydantic        & Cross-harness failure-shape divergence. \\
G & 15 & Codex subagent mode on \{tight, loose\} & Decomposition sign flips with coupling. \\
H & 8  & Cross-stack (Rust, JS, Python)       & Extends the corpus beyond Python. \\
\bottomrule
\end{tabular}
\caption{Case-study block structure. We release full per-trial data with the code and results archive.}
\label{tab:ledger}
\end{table*}

\begin{table}[htbp]
\centering
\scriptsize
\begin{tabular}{@{}lcp{0.46\linewidth}@{}}
\toprule
Sweep & $n$ & Parameter varied \\
\midrule
Fault injection     & 30 & motifs withheld, $m\in\{0,2,4,6,8\}$, six per level \\
Invariant retention & 7  & trajectory length, 24 to 96 tasks \\
Symbol rename       & 52 & symbols renamed, $K\in\{5,10,19\}$, two subset rules \\
Conflicting source  & 39 & polarity and conflict count, four agent configurations \\
\bottomrule
\end{tabular}
\caption{Parameter sweeps, held apart from the case-study blocks of Table~\ref{tab:ledger}. Each varies one quantity and scores a graded outcome rather than trial success.}
\label{tab:sweeps}
\end{table}

\section{Event-Only Estimator Enumeration}
\label{app:estimator}

The two coverage channels make an asymmetric demand on measurement. A fact supplied by $R_t$ leaves a read in the trajectory, while a fact supplied by $K_M$ leaves nothing. Any coverage estimate reconstructed from tool events is therefore blind to the parametric channel, and the framework predicts the size and sign of the resulting error rather than merely warning that one exists.

\paragraph{Design.}
We extend the simulator with a parametric channel: \texttt{prime\_prior} places a current fact copy in an actor's ledger and emits no event, so the fact is covered but invisible to any event-derived replay. One edit requires $k$ facts, each held in its own source file, and we assign each required fact to exactly one of three states: covered by reading, covered parametrically, or left uncovered. The edit writes the correct value for every fact the actor holds and a wrong value for every uncovered fact, so the oracle fails precisely when some required fact was uncovered. We enumerate all $3^k$ assignments for $k=2,\ldots,6$, which yields $9+27+81+243+729=1{,}089$ runs. Two estimators run on every trace: a \emph{union-aware} estimator that consults the ledger, and an \emph{event-only} estimator, which is the existing \texttt{infer\_edit\_coverage} replay used for the historical proxies and which therefore sees reads and handoffs but not prior coverage.

\begin{proposition}[Event-only overstatement]\label{prop:overstatement}
For one required-fact set $C$, partition its facts into read-covered $R$, parametrically covered $P$, and uncovered $U$. An event-only estimator that observes $R$ but not $P$ reports missing set $\widehat U=C\setminus R=U\cup P$. Therefore $\widehat U\setminus U=P$ and $|\widehat U|-|U|=|P|$.
\end{proposition}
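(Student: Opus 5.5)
The proof is elementary set algebra once the estimator's behavior is pinned down, so I would first make that behavior explicit and then compute. The key modeling step is to state precisely what "observes $R$ but not $P$" means for an event-only estimator. A fact in $R$ leaves a \texttt{file\_read} (or \texttt{fact\_extracted}) event in the stream. A fact in $P$ is placed in the ledger by \texttt{prime\_prior}, which emits no event, and a fact in $U$ likewise leaves no covering event. The estimator is a function of the event stream alone, and it reports a required fact as covered exactly when some covering event exists for it before the edit. Its covered set is therefore $R$, and its reported missing set is $\widehat U = C\setminus R$.

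Next I would use the partition hypothesis. Because $R$, $P$, $U$ are pairwise disjoint with $R\cup P\cup U = C$, we get $C\setminus R = P\cup U$. This gives the first claim.

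For the second claim I would compute $\widehat U\setminus U=(U\cup P)\setminus U = P\setminus U = P$, where the last equality uses $P\cap U=\emptyset$. For the cardinality statement, disjointness gives $|\widehat U| = |U|+|P|$, and subtracting $|U|$ yields $|\widehat U|-|U|=|P|$. All sets are finite subsets of $C$, so this arithmetic is well defined.

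The only real obstacle is the first step: justifying that the event-only replay's covered set is exactly $R$, with no more and no less. Two things must be ruled out. First, it must never credit a $P$ or $U$ fact through some other event, such as the agent's own write; the paper's exclusion of agent-authored paths handles this. Second, every read-covered fact must actually produce a recognized event. I would discharge this by taking it as the defining property of the estimator within the simulator, where \texttt{prime\_prior} is silent by construction. I would also note explicitly that the result concerns this idealized estimator, not the import-neighbor proxy $\rho_w$.
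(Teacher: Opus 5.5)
Your proposal is correct and follows the paper's proof: both use the disjoint partition $C=R\cup P\cup U$, so removing $R$ leaves the disjoint union $P\cup U$, and $\widehat U\setminus U=P$ and $|\widehat U|-|U|=|P|$ follow at once. The step you flag as the main obstacle, that the event-only estimator's covered set is exactly $R$, is part of the proposition's hypothesis (it is how the estimator is specified in the simulator), so the paper does not argue it separately.
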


\begin{proof}
The construction assigns each fact to exactly one of $R$, $P$, and $U$, so $C=R\mathbin{\dot\cup}P\mathbin{\dot\cup}U$. Removing $R$ leaves the disjoint union $P\mathbin{\dot\cup}U$, from which both equalities follow.
\end{proof}

\paragraph{Predictions and results.}
The framework predicts (P-A) the union-aware estimator recovers the uncovered set exactly; (P-B), formalized as Proposition~\ref{prop:overstatement} above, that the event-only estimator reports the uncovered set plus every parametrically covered fact, so its error equals the edit's parametric coverage; and (P-C) on an edit that succeeds because $K_M$ covered everything, the event-only estimator reports every required fact as a working-set miss. All three hold on all 1{,}089 runs with no violations. The union-aware estimator is exact in 1{,}089/1{,}089. The event-only overstatement equals the parametric coverage in every run, averaging $1.84$ facts and reaching $6$. Of the $124$ runs whose required facts were fully covered, $119$ pass the oracle while the event-only estimator still reports working-set misses. The five exceptions are exactly the runs in which every fact happened to be covered by reading.

\paragraph{Scope.}
This is a formal result about instruments rather than an agent experiment. That the outcome depends only on the uncovered set is a property of the simulator's own bookkeeping, and P-A likewise follows from how we maintain the ledger. We offer neither as empirical evidence for the framework. The substantive content is P-B and P-C, which characterize what the paper's own read-derived proxy cannot represent and quantify how far it errs. This is why we treat $\rho_w$ as a lower bound on $R_t$ throughout, and why the overlay condition of the main paper can raise success while lowering the measured association between self-reading and success. Unit tests pin the individual cases, including the false-alarm case of a passing, fully prior-covered edit.

\section{Five-Event Trace Simulator}
\label{app:event-simulator}

The artifact implements the enumeration of Appendix~\ref{app:estimator}, and rerunning it reproduces the four figures reported there: 1,089 runs over all $3^k$ assignments for $k=2\ldots6$, 124 fully
covered, 119 of those where the event-only estimator still reports a miss, and
five all-read runs where the two estimators agree. The implementation also
checks the two properties directly on every run rather than only the counts:
the union-aware estimator is exact, and the event-only overstatement equals the
parametric set.

The artifact includes a standard-library simulator with regression tests. Each scenario starts from the same two-file contract, produces a wrong edit, receives failing test feedback, and reverts the edit. Only the causal prefix differs, as Table~\ref{tab:simulator-traces} shows:

\begin{table}[htbp]
\centering
\footnotesize
\begin{tabular}{@{}p{0.23\linewidth}p{0.67\linewidth}@{}}
\toprule
Scenario & Identifying prefix before the common fail/revert suffix \\
\midrule
Missing read & edit intent requires a fact for which no worker has a current extraction ($M_i^a\neq\emptyset$) \\
Stale read & actor extracts source version 1; another edit advances the source to version 2; actor edits from version 1 ($Q_i^a\neq\emptyset$) \\
Handoff gap & lead extracts the current fact; run manifest assigns a delegate; no handoff extraction reaches that delegate ($H_i^a\neq\emptyset$) \\
\bottomrule
\end{tabular}
\caption{The simulator's three deterministic failure traces. They share the same visible outcome but fire disjoint causes. For edit $i$ by actor $a$, $M_i^a$, $Q_i^a$, and $H_i^a$ are its missing, stale, and handoff-stranded required facts.}
\label{tab:simulator-traces}
\end{table}

The simulator treats an \texttt{edit\_intent} as an atomic log-before-apply operation and assigns a new monotonically increasing file version both to an applied edit and to its later reversal. Worker parentage is run metadata. We would log a successful transfer as \texttt{fact\_extracted} with acquisition mode \texttt{handoff}, so omitting that record while the parent holds a current fact produces the handoff-gap trace.

We also ran a seeded matched stress test of 100 triplets, one trace per cause, with 2--8 required facts and 0--5 irrelevant reads. Every trace contains all five event types, fails and reverts its initial edit, then passes after a cause-specific recovery. The three members of a triplet have identical event-type counts through the failure boundary. Only source version, provenance, and actor ownership differ.

Finally, we audit a vendored SWE-agent trajectory fixture. Its 22 messages and 10 shell commands expose three reads, two edits, and three test-like actions, but no explicit coherence events and no revert. Therefore a SWE-agent-style action/observation trace supports a coarse retrospective projection but cannot faithfully recover \texttt{fact\_extracted} or the required facts of \texttt{edit\_intent}. Those fields require prospective annotations. The simulator remains an executable witness of the transition system and its identifiability conditions rather than data supporting the paper's empirical effect sizes. Table~\ref{tab:scaled-simulator} reports the stress-test checks.

\begin{table}[htbp]
\centering
\footnotesize
\begin{tabular}{@{}lr@{}}
\toprule
Scaled simulator check & Result \\
\midrule
Runs containing all five event types & 300/300 \\
Offline coverage replay agreement & 900/900 edits \\
Injected cause recovered & 300/300 \\
Diagnosis available before failed test & 300/300 \\
Feedback-linked revert and passing retry & 300/300 \\
Ambiguous from event counts alone & 300/300 \\
\bottomrule
\end{tabular}
\caption{Consistency and identifiability checks for the matched simulator. These are constructed-trace results and estimate neither real failure prevalence nor predictive accuracy.}
\label{tab:scaled-simulator}
\end{table}

\paragraph{Prospective implementation acceptance run.}
We also ran the production logger on the committed Pydantic-v1 Item schema and its v2 reference migration, using Pydantic 2.13.4. The deterministic trajectory first proposed the wrong mapping \texttt{ConfigDict(from\_attributes=False)} while the required migration-rule fact was absent (logged debt 1). The semantic check rejected attribute-object validation, feedback contradicted that edit, and an explicit undo restored version 1 as new file version 3 with a direct link to the failed-test event. After reading the rule and re-reading the restored target, the reference edit had empty missing, stale, and handoff sets (debt 0), produced version 4, and passed the same check. The JSON Lines (JSONL) file contains 34 monotone events: 4 reads, 24 fact extractions, 3 intents (including the restoring write), 2 test-feedback events, and 1 revert. An independent replayer, given only the 28,599-byte JSON Lines file, reconstructed all four Item-schema versions, the actor fact ledger, and the three feedback/contradiction/revert causal links, and its final bytes equal the retained workspace.

This is an implementation acceptance test rather than an agent-performance experiment: the edit is reference-backed, we inject the failure deliberately, and the run covers one model rather than the full 79-test monorepo. It establishes the operational claims the event model needs, namely durable pre-write intent, deterministic semantic provenance, feedback-linked reversal, and standalone replay, but it contributes no effect-size evidence. The artifact records the trace, reconstruction, manifest, and analysis.

\section{Order Independence and Its Limit}
\label{app:order}

Coherence debt is defined at an edit and refers only to what is covered when the
agent writes. It says nothing about when or how a fact arrived. Two trajectories
are therefore interchangeable as far as the account is concerned when they
produce the same edits and cover the required slice at each one, and success is a
property of that equivalence class rather than of a particular ordering.

Front-loading is the canonical member of the class. Supplying all of $C_T$ before
the first edit sets $D(e_i)=0$ everywhere by construction, which collapses an
interleaved tool-using session into a single exchange. Our front-load condition
is that collapse performed deliberately, and the monotone overlay gradient in the
main paper is the same effect applied by degrees.

Agentless~\citep{Xia2025Demystifying} demonstrates the engineering form of this:
a fixed localize, repair, and validate pipeline competes with agent frameworks
that explore for many turns. Read through the account here, that result is what
order independence predicts once localization is good enough to assemble $C_T$ in
advance. Retrieval-oriented systems that hand a solver a prepared context rely on
the same effect.

The limit is not the window but the fact set. $C_T$ is relative to the
implementation path, so alternative correct patches require different facts. To
front-load $C_T$ one must already know which path will be taken, and the path is
what the trajectory produces. Order independence therefore holds given an edit
sequence. It does not supply the sequence. Our workloads hide this because their
specification fixes the path, which is exactly why front-loading recovers almost
every trial there and why we make no claim that it would on an underspecified
issue.

Two consequences follow that the coverage view alone does not suggest. First,
supplying a fact that disagrees with the repository is worse than supplying
nothing, because the written source wins: front-loading a stale $C_T$ injects
error with authority rather than leaving a gap. Second, the steering a human
performs across turns divides into supplying facts, which this account covers and
which is front-loadable, and choosing among correct implementations, which
changes the applicable $C_T$ and which it does not cover. How much of real
steering falls in the second class is an empirical question we have not measured.

\section{Symbol-Rename Sweep}
\label{app:renamesweep}

The adversarial rename in the main paper replaces a library's surface forms
wholesale. This sweep varies how much of that surface we replace.

\paragraph{Sweep.}
We rename $K$ of nineteen symbol pairs, sweeping $K$ over five, ten, and nineteen,
and draw the subset two ways: by centrality, taking the most fundamental symbols
first, and at random with a fresh seed per trial so that different trials contest
different symbols. Fifty-two trials enter the analysis.

\paragraph{Outcome.}
A pass rate is uninformative here, because heavily renamed output frequently
fails to import and every such trial scores identically. We therefore classify
each renamed symbol by how the agent resolved it: \emph{correct} when only the
supplied name appears, \emph{stale} when only the original does, \emph{mixed}
when both appear, and \emph{invented} when the agent writes a name from neither
set. The four counts are available whether or not the artifact runs, and they
separate two failures a pass rate merges, namely falling back on the prior and
inventing a replacement.

\section{Supplied-Fact Distance}
\label{app:distance}

The residency proxy in the main paper is observational: a recent read can
mean the agent found the file and edited soon after. To separate position
from retrieval behavior we randomize where a required fact sits and hold
everything else fixed. Each motif's secret value, \texttt{SALT}, is supplied in the
prompt, followed by $N$ characters of fixed task-irrelevant filler, then
the instruction that consumes it. Only $N$ varies. Withholding the fact
gives the floor.

We use two harnesses. In the agent arm a tool-using agent works in an
isolated workspace containing only the task. In the closed-book arm a model receives
one prompt and no tools, so re-reading is impossible and position is the
only variable. Scoring is mechanical in both.

Table~\ref{tab:distance} reports the result. No arm declines with
distance. Because every supplied arm sits at its ceiling, we also ran a
harder variant requiring arithmetic on the retrieved value, which moves
Qwen off the ceiling to between 8\% and 19\%. That variant shows no
monotone trend either, and its best cell is at $32{,}000$ characters of
separation rather than at zero.

\begin{table}[htbp]
\centering
\footnotesize
\begin{tabular}{@{}llcc@{}}
\toprule
Arm & Model & Separation & Solved \\
\midrule
Agent, tools    & Sonnet    & withheld        & 0/24 \\
Agent, tools    & Sonnet    & source readable & 24/24 \\
Agent, tools    & Sonnet    & 0--128K chars   & 24/24 each \\
Closed book     & DeepSeek  & withheld        & 0/80 \\
Closed book     & DeepSeek  & 0--200K chars   & 80/80 each \\
Closed book     & DeepSeek  & 0--200K, 32 facts & 192/192 each \\
Closed book     & Qwen      & withheld        & 2/192 \\
Closed book     & Qwen      & 0--200K chars   & 192/192 each \\
\bottomrule
\end{tabular}
\caption{Supplied-fact distance. Withholding the fact floors every arm, and
distance changes nothing up to $200{,}000$ characters.}
\label{tab:distance}
\end{table}

We read this as bounding the mechanism rather than contradicting the
presence finding. Position within a window does not degrade a fact that is
still present. A residency effect must therefore come from facts leaving
the window under compaction or eviction, from whether the agent re-reads,
or from the ordering confound above, and not from distance itself. The
arms are at ceiling, so they bound large effects only.

\section{Invariant-Retention Sweep}
\label{app:invariants}

This sweep bounds how far eviction can carry the account. It asks whether a fact
the agent has already been given stops being usable as a trajectory grows.

\paragraph{Construction.}
The workspace states sixteen engineering invariants once, in a single document,
and then issues a sequence of unrelated implementation tasks. A task is revealed
only after the previous one's artifact exists, so the agent cannot read the queue
ahead, and task text is stored encoded to prevent bulk retrieval. Each task needs
different logic from the last, which stops one template from covering the set.
These three properties matter: in earlier designs of ours an agent satisfied the
whole workload without carrying anything, by reading every fact at the start and
writing the answers into a single shell command.

\paragraph{Sweep and outcome.}
We vary the trajectory length over 24, 48, and 96 tasks at sixteen invariants, with
seven trials in total, plus a shorter twelve-invariant calibration run. A syntactic predicate checks each
invariant against every emitted artifact, so we score compliance per task
position without needing the code to run. We also count re-reads of
the invariant document, since an agent that consults it before each task would
show compliance without retention.

\paragraph{Reading.}
No cell shows decay. Compliance holds at every task position, the agent does not
return to the statement, and context reaches roughly $140{,}000$ tokens by the
longest cell. We report this as a bound rather than as support: it says we could
not induce a working-set miss on this workload, not that none exists.

\section{Fault-Injection Sweep}
\label{app:faultinjection}

The main paper's presence result rests on withholding a controlled number of
required facts. This section records how we build the fault and what we vary.

\paragraph{Construction.}
Each task holds eight independent motifs. A motif is three Python files coupled
through one secret integer that exists in a single place: the first file states
it, the second derives a value from it, and the third derives a further value
from that. The task oracle pins all three, so a motif contributes four tests and
a complete task contributes 32. Because we author the coupling, the required-fact
set is known by construction rather than inferred, which is what makes a
controlled withholding possible at all.

\paragraph{The injected fault.}
We withhold the secret for $m$ of the eight motifs and leave the remaining
motifs untouched, sweeping $m$ over $0$, $2$, $4$, $6$, and $8$ with six trials at
each level, 30 in total. Nothing else changes between levels: the same workspace,
the same instructions, the same evaluator. The prediction the sweep tests is
additivity. If debt is a count of missing facts, withholding $m$ motifs should
cost the work those $m$ motifs support and leave the rest intact, so the passed
count should fall linearly from 32 to 0 in steps of four tests per withheld
motif.

\paragraph{Outcome.}
We score passed tests out of 32 rather than trial success, since a binary outcome
cannot express partial damage and would report every level below $m{=}0$ as an
identical failure. Cell means are $32.0$, $24.0$, $16.7$, $8.0$, and $0.0$ against
the linear prediction $32$, $24$, $16$, $8$, $0$. Three of the five cells have zero
spread, and the widest, $m{=}4$, has a standard deviation that rounds to $1.1$.

\section{Direct Withholding Sweep}
\label{app:withholding}

The fault-injection sweep of Appendix~\ref{app:faultinjection} removes a fact from the
partition available to a worker. This sweep removes it from the workspace
outright, which separates availability from the multi-agent machinery.

\paragraph{Construction.}
Each of eight motifs stores its required value in \texttt{pkg\_m/secret.py}, and
the three files of that motif depend on it. We withhold the value from $k$ of the
eight and sweep $k \in \{0,2,4,6,8\}$, in two forms. \emph{Deletion} removes the
file. \emph{Redaction} replaces its contents with a stub of identical byte length
that defines nothing, so a readable file remains where the value was. The two put
the agent in the same position, and differ only in whether an open is possible.

\paragraph{Controls.}
Four checks run before any trial is scored. Tests are withheld throughout via
\texttt{--leak-free}, since the generated assertions state the expected literals
and would otherwise return the value we removed. Seeds are restricted to those
whose eight salts are distinct and exceed the motif indices, so a withheld value
is neither visible elsewhere nor confusable with an index. Two of eight arbitrary
seeds fail that test. Byte counts are compared between intact and redacted files.
And the withheld motifs are drawn at random rather than taken as the first $k$:
taking the first $k$ means an agent working in order meets only broken motifs,
generalizes from two, and stops, which measures halting rather than availability.

\paragraph{Outcome.}
Each motif supports four tests. Median tests passed, six trials per cell:

\begin{center}
\begin{tabular}{lrrrrr}
\toprule
$k$ withheld & 0 & 2 & 4 & 6 & 8 \\
\midrule
redaction & 32 & 24 & 16 & 8 & 0 \\
deletion  & --- & 24 & 16 & 8 & 0 \\
prediction & 32 & 24 & 16 & 8 & 0 \\
\bottomrule
\end{tabular}
\end{center}

Maximum deviation over the nine cells is zero tests.

\paragraph{Residency under the same manipulation.}
Residency is computed against the authored neighborhood rather than the import
graph, because \texttt{secret.py} appears in no import neighborhood: the
templates require the value as a literal and forbid importing it. Agent-authored
paths are excluded from the numerator, since an agent that writes the missing
file and reads it back has acquired nothing. Without that exclusion the score
returns $1.000$ at $k=8$, where no required fact exists at all; with it, $0.000$.

Comparing redaction against deletion at $k=4$, where all 28 trials yield a
scorable edit, gives AUC $0.59$ with a 95\% interval of $[0.36,0.81]$. At $k=8$
the figure is $1.00$, but only five and six trials per arm survive to be scored and they
are selected by the behavior under study.

\section{Harness Spend at Fixed Outcome}
\label{app:harness}

\paragraph{Coverage against withholding.}
The main sweep holds the workload fixed, so coverage cannot separate the
configurations. We therefore withheld $k$ of eight required facts and reran the
five configurations at $k \in \{0,4,8\}$, with four seeds each and the tests withheld. Coverage is set by availability alone: every configuration loses
exactly the proportion withheld, the spread across harnesses never exceeds three
points at any level, and spend at $k{=}0$ still differs by $12.5\times$.

opencode driving GLM-5.2 is excluded from this sweep rather than reported at
zero. Under \texttt{--leak-free} the agent runs in an isolated copy, and
opencode resolves a project root by walking up rather than honoring the
directory it is given: its transcripts show it searching the surrounding
repository for a task file that lives in the copy. That is an instrumentation
failure on our side and says nothing about the model.

\paragraph{Design.}
The sweep runs six configurations over three task sizes, eight trials each, 144 trials in total: Claude
Code at four tiers, Codex CLI, and opencode driving GLM-5.2.

\paragraph{Accounting.}
The three vendors do not report comparably. Claude splits input into disjoint
uncached, cache-read, and cache-write buckets. Codex reports input inclusive of
its cached share, and opencode reports input excluding cache reads. We separate the
buckets and recombine them identically. Codex forks itself into parallel
sub-agents that share a session identifier and bill separately, so a trial is
summed over every fork it produced. Reading only the parent undercounts it
roughly threefold.

\paragraph{Result.}
Every trial passes every test. Cumulative input spans $12.8\times$, from
293{,}882 to 3{,}752{,}134 tokens, while peak per-turn context spans $1.8\times$.
The difference is turn count: five tool calls at the cheapest, seventy-nine at
the most expensive, and an agentic loop re-sends its conversation each turn. Reasoning
tokens are at most 0.46\% of input.

These are token counts rather than costs. Cache reads dominate every total and are billed far below fresh input, and
vendors price caching differently. On uncached input the ordering does not
even survive: twenty-two tokens at the cheapest configuration against
163{,}236 at the most expensive.

\section{Abstention Across Configurations}
\label{app:abstention}

\paragraph{Design.}
Every required fact is withheld by deletion, so there is nothing to be right
about and the only question is what the agent does. We run eight seeds per configuration (six for opencode) with the tests withheld throughout.

\paragraph{Classification.}
The closing turn is searched for an explicit statement of being blocked or a
request for the missing value, and the tool stream is searched for creation of the
file whose absence defined the trial. We count a trial as blocked (blk.) when it states it cannot proceed, as fabrication (fab.) when it creates the withheld file, and as proceeding (proc.) otherwise.

\begin{center}
\begin{tabular}{lrrrr}
\toprule
configuration & $n$ & blk. & fab. & proc. \\
\midrule
Opus, Claude Code   & 8 & 8 & 0 & 0 \\
Fable, Claude Code  & 8 & 6 & 0 & 2 \\
Sonnet, Claude Code & 8 & 2 & 0 & 6 \\
Haiku, Claude Code  & 8 & 1 & 3 & 4 \\
GPT-5, Codex CLI    & 8 & 0 & 0 & 8 \\
GLM-5.2, opencode   & 6 & 0 & 0 & 6 \\
\midrule
pooled              & 46 & 17 & 3 & 26 \\
\bottomrule
\end{tabular}
\end{center}

A separate 96-trial block on Haiku reports being blocked in 13.5\% of trials, against 12.5\% here. In that
block, stating in the placeholder that the value had been withheld for a trial
made no difference we can detect: 8 of 48 trials blocked against 5 of 48, Fisher exact
$p=0.552$.

\section{Conflicting-Source Experiment}
\label{app:conflict}

This section documents the workload behind the main paper's conflicting-source
result, in which a written standard and working code cover the same fact and
disagree.

\paragraph{Contested surfaces.}
The workspace states ten engineering rules in a standard, \texttt{STYLE.md}, and
demonstrates each in existing modules the agent can read. Ten surfaces are contestable, and how many are actually contested is the swept
parameter: monetary rounding, payload field access, timestamp awareness,
validation routing, response key casing, timestamp serialization, log
formatting, handler naming, status casing, and where a fee rate comes from.
The agent decides every surface independently in each emitted handler, so a
trial yields ten decisions per task rather than one outcome.

\paragraph{Conflict count.}
The parameter is how many of the ten rules the existing code contradicts. We
report cells at four and ten. The remaining rules agree across both sources and
act as a within-workspace control, so the rule volume the agent must satisfy is
constant across the sweep and only the number of contradictions changes.

\paragraph{Serial task issuance.}
A workspace script issues tasks one at a time, and the next appears only once
the previous handler exists. Task text is stored encoded, so the
queue cannot be read ahead in bulk. Each task requires different arithmetic, so
no single template covers the set. These three properties block a shortcut that defeated earlier designs of ours (Appendix~\ref{app:invariants}):
given the whole task list at once, an agent reads every fact, writes the answers
into one shell command, and loops over them.

\paragraph{Polarity.}
Under normal polarity the standard states current good practice and the code
demonstrates the opposite. Following the standard is then confounded with
writing sensible code. The inverted arm removes that confound: the standard asks
for the worse form and the existing code demonstrates the better one, so a
handler that matches the standard cannot be explained by a competent prior.

\paragraph{Scoring.}
A syntactic predicate checks each contested surface in the emitted handler and
classifies it as following the standard, following the code, following neither,
or mixing both. Scoring reads the emitted source, so it does
not require the package to import or the tests to run. We also record whether
the agent modified either source, since editing away the contradiction would
dissolve the manipulation. No trial did.

\paragraph{Coverage.}
The main paper reports 39 trials spanning three model tiers of one harness and a
second harness with an independent scaffold, at both polarities. Cells are
uneven in size because the harnesses rate-limit independently.

\section{Standard-Against-Code Sweep}
\label{app:stale}

\paragraph{Design.}
The workload has ten contested surfaces where a written standard and working code can disagree: integer cents against float money, \texttt{.get()} with a default against direct indexing, timezone-aware timestamps against a naive clock, and seven more. We compare three conditions on the same surfaces: a standard that agrees with the code, no standard at all, and a standard demanding the worse form while the code demonstrates the better one. Each trial issues twelve tasks one at a time, and each condition has ten trials.

\paragraph{Scoring.}
Outcome is the share of contested surfaces written the objectively better way,
which holds fixed across all three conditions and so places them on one scale. A
surface counts only where exactly one form is present.

\begin{center}
\begin{tabular}{lrr}
\toprule
condition & decisions & better form \\
\midrule
correct standard & 1{,}125 & 100\% \\
no standard      & 1{,}065 & 33\% \\
stale standard   & 1{,}195 & 0\% \\
\bottomrule
\end{tabular}
\end{center}

The endpoints carry no variance: all ten correct-standard trials take every
decision, all ten stale-standard trials take none. An earlier version of the
no-standard arm removed the standard while every task still instructed the agent
to re-read it, which measured how an agent handles an instruction pointing at a
missing file. We rebuilt the arm and report only the rebuilt runs here.

\section{Fidelity of the Import Proxy}
\label{app:edgefidelity}

On workloads whose coupling we author, the facts each edit requires are known, so
the fidelity of a candidate neighborhood is measurable without agent runs. Over
six trials, against 40 authored edges per trial:

\begin{center}
\begin{tabular}{lrr}
\toprule
generator & precision & recall \\
\midrule
imports, symmetric (as used) & 0.40 & 0.40 \\
imports, directed dependencies & 1.00 & 0.40 \\
identifier resolution & 0.04 & 1.00 \\
\bottomrule
\end{tabular}
\end{center}

Making the graph directed removes every spurious edge at no cost to recall, so
the symmetric convention is pure loss. Identifier resolution reaches perfect
recall at 4\% precision, which suits it to generating candidates rather than to
measurement. The workload requires its coupled value as a literal and forbids
importing it, so the dependency has no syntactic trace: that is the case an
import graph cannot see, and it is the majority of the authored edges.

\section{Semantic Graph Pilot Details}
\label{app:semantic-graph}

The import graph is computable but cannot represent many couplings the framework names, so we tested a semantic refinement before considering it as an edit-time metric. A Tree-sitter extractor builds nodes for files, classes, functions, methods, tests, imports, decorators, routes, validators, and configuration providers, with edges for imports, calls, references, inheritance, test targets, route handlers, validator models, and configuration consumers. On the 21 top-level Python modules of our own analysis code, we froze six task descriptions and their gold files and symbols before retrieval, then compared (i) a coarse directory/file graph reconstructed from \texttt{go-ingest}, (ii) an import-only projection of the same Tree-sitter parse, and (iii) the full semantic graph. Each condition uses the same deterministic lexical anchors and a two-hop graph neighborhood. This is a navigation and context-packing test, analogous to the structural guidance of RepoGraph~\citep{Ouyang2025RepoGraph}. It is not a test of edit-time coherence debt.

Table~\ref{tab:semantic-pilot} shows a mixed result. The semantic graph reaches all gold files by rank five, but its mean reciprocal rank (MRR) falls from $.833$ to $.750$ and its Hit@1, the fraction of tasks whose top-ranked file is correct, from $.667$ to $.500$: richer edges do not improve top-ranked file localization. They do improve packing once the budget can hold several symbol chunks, though the gain varies across budgets, and the sweep below gives the full curve. The representation therefore changes a granularity/cost trade-off rather than dominating the import graph. This pilot has six authored tasks in one small research repository, and that corpus exercises calls, references, and test-target links but none of the route, validator, or configuration heuristics that motivated the refinement. We treat it as a check that semantic neighborhoods can pack different context, not as evidence that semantic residency predicts agent success, and we retain the import proxy for every headline trajectory result in the main paper. We would need a cross-repository, issue-derived evaluation that ablates file ranking and chunking separately before replacing it.

\begin{table}[htbp]
\centering
\scriptsize
\begin{tabular}{@{}lrrrr@{}}
\toprule
Representation & MRR & Recall@5 & File@12K & Symbol@12K \\
\midrule
Directory/file & \textbf{.833} & .967 & .256 & .778 \\
Import-only & \textbf{.833} & \textbf{1.000} & .256 & .778 \\
Semantic & .750 & \textbf{1.000} & \textbf{.567} & \textbf{.883} \\
\bottomrule
\end{tabular}
\caption{Six-task semantic-graph pilot. MRR is mean reciprocal rank of the first gold file, Recall@5 the fraction of gold files retrieved by rank five, and File@12K and Symbol@12K the relevant-file and relevant-symbol recall in a 12,000-character packed context. The richer graph improves packing but not file ranking.}
\label{tab:semantic-pilot}
\end{table}

The artifact's \texttt{tools/repo-graph} package uses the local \texttt{tools/py-tree-sitter} checkout~\citep{TreeSitterContributors2026PyTreeSitter} and a Python grammar to parse the same 21 files for both the import-only and semantic conditions. The directory/file graph contains 23 nodes and 22 edges, the import projection 21 nodes and 20 edges, and the semantic graph 445 nodes and 1,521 edges. Semantic nodes comprise 21 files, 7 classes, 177 functions, 12 methods, 12 tests, 211 imports, and 5 decorators. Its edges include 347 calls, 371 references, 33 test-target links, 20 file imports, and one inheritance link. The implementation also emits route-handler, validator-model, and configuration-consumer links, but this corpus instantiates none of those framework-specific cases.

Each representation receives the same task text, lexical-anchor scorer, two-hop neighborhood, and cutoffs $k\in\{1,3,5,10\}$. We sweep packed-context budgets of 4K, 8K, 12K, and 24K characters. Semantic relevant-file recall is $.339,.339,.567,.828$ across that sweep, versus $.256,.256,.256,.289$ for both coarse baselines. Semantic relevant-symbol recall is $.478,.478,.883,.958$, versus $.556,.639,.778,.778$. Semantic packing uses definition-level chunks while the baselines pack whole files, so this comparison evaluates the complete representations, not graph edges in isolation. Our next experiment must cross file/symbol chunking with coarse/semantic edges factorially, on independently authored issue tasks, before we attribute the gain to edge type.

Tree-sitter makes syntax and source locations available but does not provide Python type resolution. Our resolver links same-file definitions, imported bindings, \texttt{self}/\texttt{cls} methods, and unique repository symbols conservatively. It omits ambiguous targets and can miss dynamic imports, reflection, dependency injection, monkey-patching, and runtime dispatch. These limitations are why we treat the semantic graph $\hat G_T^{\mathrm{sem}}$ as another proxy rather than as $G_T$ itself.

\section{Measurement and Benchmark Details}
\label{app:measurement}

We collect here the supporting measurements behind the trajectory results: the AUC bootstrap procedure, alternative engagement baselines, import-graph extraction, read volume in the bloat intervention, and SWE-bench execution.

\paragraph{AUC cells and bootstrap procedure.}
The residency analysis groups trials by \texttt{(agent, workload)}. A cell enters ROC AUC only if it holds at least one success and one failure, and we exclude one-class cells rather than assign them AUC $0.5$. For a per-cell interval, the implementation resamples the success and failure strata separately, preserving both classes in every one of 1,000 draws. Ordinary unstratified bootstrap samples with one class therefore never enter the interval.

\paragraph{Residency and engagement baselines.}
Table~\ref{tab:baselines} compares $\rho_{32}$ with coverage, volume, and effort baselines on an $n{=}12$ lean-versus-code-bloat block. We sign every entry in one direction, namely the AUC at which the lean condition scores higher, so that the rows stay mutually interpretable. In this block residency reaches AUC $0.83$, volume baselines remain near chance, and the two effort measures separate the conditions in the opposite direction. The twenty-trial rerun reported in the main paper reverses this pattern, so we keep the table as a record of the original block rather than as evidence that residency separates the conditions. This block is a dedicated run of six lean and six code-bloat trials, held apart from the intervention cells of Table~\ref{tab:bloat} of the main paper. No trial in it succeeds, and all twelve nonetheless pass 97.5--98.7\% of the suite, so success is invariant across the block and outcome cannot drive any of these comparisons. We recompute all values from the released ledger.

\begin{table}[htbp]
\centering
\footnotesize
\begin{tabular}{@{}lc@{}}
\toprule
Baseline & AUC (lean $>$ bloat, $n{=}12$) \\
\midrule
Residency $\rho_{32}$ (coverage)           & \textbf{0.83} \\
Distinct files edited (coverage)           & 0.82 \\
Number of reads (volume)                   & 0.58 \\
Number of tool calls (volume)              & 0.53 \\
Number of edits (volume)                   & 0.49 \\
Wall time (effort)                         & 0.33 \\
Self-rereads (effort)                      & 0.24 \\
\bottomrule
\end{tabular}
\caption{Condition-discrimination AUC on the $n{=}12$ lean-vs-bloat block, all signed as lean scoring higher. In this block the coverage measures separate the conditions and volume counts sit near chance, while the effort measures fall below $0.5$ because in-file bloat makes the agent reread more and work longer while covering less of the dependency neighborhood. The twenty-trial rerun in the main paper reverses this pattern. A self-reread is a read of a file the same agent had already read.}
\label{tab:baselines}
\end{table}

\paragraph{Import-graph extraction.}
We build the proxy graph $\hat G_T$ by abstract syntax tree (AST) traversal of every Python source, resolving each \texttt{import} and \texttt{from $\ldots$ import} via package-relative path. The extractor captures docstring-embedded imports, conditional imports, and re-exports through \texttt{\_\_init\_\_.py}, with a regex fallback for malformed workspaces. The cross-family repositories (\texttt{marked}, \texttt{ripgrep}) use language-specific extractors of the same shape. We canonicalize file paths to workspace-relative form before set comparison. For each edit on file $f$, $\rho_w$ is the fraction of the in- and out-neighbors of $f$ in $\hat G_T$ among the distinct files in the agent's tool-result stream within the last $w$ tool events.

\paragraph{Read volume per bloat condition.}
Table~\ref{tab:bloat-volume} reports median bytes read for the same conditions as Table~\ref{tab:bloat} of the main paper, which separates placement effects from raw reading volume.

\begin{table}[htbp]
\centering
\footnotesize
\begin{tabular}{@{}lrr@{}}
\toprule
Condition & Claude bytes & Codex bytes \\
\midrule
lean                        & 35{,}497  & 104{,}551 \\
document bloat              & 21{,}214  & 50{,}992 \\
code bloat                  & 24{,}200  & 152{,}270 \\
code bloat, single agent    & 11{,}774  & 501{,}605 \\
renamed twin                & 8{,}843   & 151{,}173 \\
\bottomrule
\end{tabular}
\caption{Median byte count of file content each agent read in the bloat-placement conditions. Pass rates appear in Table~\ref{tab:bloat} of the main paper.}
\label{tab:bloat-volume}
\end{table}

\paragraph{SWE-bench execution.}
Every trial ran under its harness and wrote harness-native transcripts when the harness exposed them. The Codex event extractor consumes the current \texttt{codex-cli 0.14x} session layout. We score Aider runs but exclude them from trajectory statistics because they expose no recoverable multi-edit stream in this setup.

\section{Run-to-Run Stability}
\label{app:stability}

We executed thirty (benchmark, agent, condition) cells twice in separate
batches on different dates, covering 404 trials. Full-marks
outcomes disagree across batches on 4.8\% of matched trials, which bounds
the noise under the contrasts we report.

\section{Execution Environment and Reproducibility}
\label{app:reproducibility}

We record here the execution details needed to reproduce the closed-book isolation: the sandbox's exact scope, the local-model serving environment, and the Codex run we excluded from the sandboxed front-load headline.

\paragraph{Recomputing the reported values.}
We recompute the following from the released per-trial ledgers and compare each against its printed value: the condition-discrimination baselines, the independent-fixes decomposition control and its wall-clock speedup, the estimator-enumeration summary, and the semantic-graph pilot including its graph statistics and budget sweep.

\paragraph{Project repository.}
The project repository is available at \url{https://github.com/mpi-dsg/agent-coherence}.

\paragraph{Closed-book sandbox.}
Every closed-book trial in the main paper's channel-control experiments runs the model process under an operating-system sandbox policy that we generate per trial and retain with the trial record. The policy denies all reads beneath the project root, which holds the workload generators, the reference outputs, and every prior trial transcript, and then re-admits two things the process needs to start: the trial's own scratch directory, and the command-line interface (CLI) configuration paths the tool probes at launch. We build it as a denylist over a permissive base, because the CLI must still load its binary, its libraries, and its interpreter.

Paths elsewhere on the machine stay readable, so we do not claim total isolation. The claim is narrower and sufficient: the answer specific to each workload exists only beneath the denied root. We hand-authored the four novel workloads, and their application programming interface (API) rules appear in no other file on the machine and in no training corpus.

\paragraph{Local-model cluster.}
We served DeepSeek-Coder-V2-Lite ($16$B parameters, $30$\,GB weights) and Qwen3-Coder-30B-A3B-Instruct ($57$\,GB weights) locally via vLLM $0.24$ on an H100 PCI Express (PCIe) graphics processing unit (GPU) node (2$\times$80\,GB, PCIe Gen5, 1\,TB main memory). DeepSeek uses one GPU at $85$\% memory use and its default sequence length. Qwen also uses one GPU, and we trim its default $262{,}144$-token context to $65{,}536$ tokens so the key-value (KV) cache fits alongside the weights. Neither model needs custom kernels.

\paragraph{Excluded Codex front-load diagnostic.}
The main front-loaded numbers in Table~\ref{tab:frontload} of the main paper exclude Codex: under the final sandbox policy its binary cannot complete a read it makes at startup. With the sandbox relaxed, Codex reaches $12/12$ in 12 of 24 trials, three of six per workload, so it recovers on every workload without matching the ceiling rate of the other families. The denied read happens before the model API call, so it depends on neither the prompt content nor the recovery behavior. We report this only as an instrumentation diagnostic. It does not enter the headline front-load evidence or support a capacity claim.

\section{Ethics and Data Statement}
\label{app:ethics}

We accessed all models through published inference APIs (Anthropic, OpenAI, Z.ai, Google) or as self-hosted open-weight checkpoints (DeepSeek-Coder-V2-Lite, Qwen3-Coder-30B) under their respective licenses. We involve no human subjects, no personally identifying data, and no confidential source code. We hand-authored the synthetic workloads specifically for this study, and they contain no code drawn from any repository.

Full per-trial data accompanies the code release: prompts, raw model responses, evaluation logs, and sandbox profiles. The closed-book matrix, front-load recovery and its reference comparison, test-identity Jaccard, SWE-bench resolved rates, matched event simulator, and semantic-graph pilot are recomputable from that released data. The residency baselines, the fault-injection sweep, the conflicting-source cells, and the SWE-bench trajectory statistics we recompute separately, from their own analysis code rather than from the shared recomputation script.

The tool-using intervention cells, namely the residency baselines, the overlay sign flip, the bloat matrix, and the capacity and cross-harness counts, ship as derived per-trial records. The main paper marks the single-intervention cells among them as exploratory rather than confirmatory. The Execution Environment and Reproducibility section above records the sandbox profile, local-model cluster, and excluded Codex diagnostic needed to reproduce the execution environment.

\end{document}